\newtheorem{theorem}{Theorem}
\newtheorem{proposition}[theorem]{Proposition}
\newtheorem{remark}[theorem]{Remark}
\title{Shortermism and excessive risk taking in optimal execution with a target performance}
\author{Emilio Barucci\thanks{Department of Mathematics, Politecnico di Milano}\and Yuheng Lan\thanks{Department of Mathematics, Politecnico di Milano}
}
\begin{document}
	\global\long\def\sgn{\mathop{\mathrm{sgn}}}
	\linespread{1.5}
	
	\maketitle
	
	\abstract{We deal with the optimal execution problem when the broker's goal is to reach a performance barrier avoiding a downside barrier. The performance is provided by the wealth accumulated by trading in the market, the shares detained by the broker evaluated at the market price plus a slippage cost yielding a quadratic inventory cost. Over a short horizon, this type of remuneration leads, at the same time, to a more aggressive and less risky strategy compared to the classical one, and over a long horizon the performance turns to be poorer and more dispersed.}
	
	\vspace{1cm}
	
	\begin{flushleft}
		{\bf Keywords:} Optimal execution, performance, remuneration, target\\
	\end{flushleft}
	
	\newpage
	\section{Introduction}
	Since the seminal contribution of \cite{ALGCH}, a large literature has examined the optimal execution problem: how to sell a stock of shares to maximize wealth and, at the same time, minimize inventory costs over a finite horizon; see \cite{CAR15,DON} for a review of the literature.  
	In this paper, we deal with optimal execution when the broker aims to reach an upper performance barrier avoiding the downside barrier. Performance is provided by the wealth accumulated by trading in the market, the market value of the shares held by the broker, and a quadratic inventory cost.
	
	The paper is related to the literature on optimal contracts for order execution between the owner of the shares (principal) and the broker (agent) who effectively trades on the market, for example, see \cite{BALD,LARS}. We can interpret our setting as a remuneration scheme consisting of a positive fixed remuneration if the broker reaches the upper performance barrier and a null remuneration if the downside barrier is reached; see \cite{BR95,BR98,BR99} for the analysis of this type of remuneration scheme in the asset management setting.
	The remuneration scheme leads the broker to maximize the probability of success (reaching the upper performance barrier) avoiding ruin (reaching the downside performance barrier). 
	
	A remuneration based on achieving a performance target is considered suspicious because it is a non-linear scheme and therefore may lead to excessive risk taking and shortermism, as shown in the asset management/executive compensation literature, see \cite{BARMA2,BARMA,BASA,CARP,GRIN,ROSS}. 
	In this paper, we investigate the effect of the above remuneration scheme on the optimal execution strategy. 
	
	To our knowledge, there is little research on optimal execution with a target on performance. The only exception being provided by \cite{JAKIN}, where the optimal acquisition problem is analyzed with a price limiter: the broker minimizes the cost of acquisition of a certain number of shares with a cost penalizing the inventory and a limit price threshold (when the asset price touches the upper price limit the broker acquires all the remaining shares at the market price impacted by the number of traded shares). Notice that in the above paper a barrier on the asset price is considered, instead in our paper we consider symmetric barriers on the performance.  
	
	We show two main results.
	
	First of all, the optimal strategy with a performance target foresees to liquidate the shares at a much higher rate compared to the solution obtained solving the classical optimal execution problem over a finite horizon, i.e., exponential rather than linear rate. This renders a higher performance compared to the classical solution in the short run because of smaller inventory costs, but not in the long run because trades occur with high execution costs and the broker is sooner trapped with almost no shares to sell. Over a short horizon, the optimal strategy yields a higher performance on average and a smaller dispersion, compared to the solution with a finite horizon, and over a long horizon the reverse is observed.
	
	The strategy turns out to be aggressive and conservative at the same time. Shares are quickly liquidated, performance goes up, but then the broker balances the goal of reaching the upper barrier avoiding the lower one. As a consequence, there is a high probability of reaching the upper barrier and of remaining between the two barriers with a very low probability of reaching the lower barrier. Instead, the strategy obtained by solving the classical problem allocates a significant probability to reach the lower barrier. 
	
	The key insight is that the strategy is aggressive compared to the classical solution, but balances the probability of touching the upper barrier against the probability of touching the lower barrier, which is almost null. Instead, the classical solution takes risks on both sides.
	Although the strategy obtained with a performance target takes care of the downside risk, we show that it leads to good performance only over a short period and not over a longer horizon.
	Therefore, we can conclude that a remuneration scheme based on reaching a performance target avoiding the downside barrier is affected by shortermism but doesn't entail excessive risk taking.
	
	The paper is organized as follows. In Section \ref{MOD} we present the model and formulate our problems.  
	In Section \ref{P1} we address the maximization problem.
	In Section \ref{NUM} we develop some numerical analysis/simulations.
	In Section \ref{PRIC} we compare the strategy with the one with a price limiter. Section \ref{CONC} concludes.

	\section{The Model}
	\label{MOD}
	We consider the classical optimal execution problem in continuous time, see \cite{ALGCH,CAR15,DON}. 
	The broker holds $Q_0$ shares in $t=0$ and wants to sell them in the market.
	
	Denote $v(t)$ the number of shares sold by the broker at time $t$. The amount of shares of the 
	broker evolves according to the following law of motion: 
	\begin{equation}
		\label{dQ}
		dQ(t)=-v(t)dt, \ \ Q(0)=Q_0.
	\end{equation}
	
	The market price of the asset is permanently affected by quantity sold in the market by the broker:
	\begin{equation}
		\label{PRICE}
		dS(t)=-f(v(t))dt+ \sigma dW(t), 
	\end{equation}
	where $W$ is a standard Brownian motion, and $f(\cdot)$ models the permanent price impact of the shares sold by the broker.
	
	We distinguish between the asset market price $S(t)$ and the price at which the order is executed $\hat{S}(t)$ (execution price) which is impacted by the trading of the broker:
	\begin{equation}
		\label{IMPACT}
		\hat{S}(t)=S(t)-g(v(t)),
	\end{equation}
	where $g(\cdot)$ represents the temporary price impact.
	
	The wealth of the broker evolves as follows
	$$
	dX(t)=\hat{S}(t)v(t) dt, \ \ X(0)=0.
	$$
	
	In what follows, we stick to the simplest model, which foresees a linear impact of the broker's trades on both the market price and the execution price: $f(v)=bv , \ g(v)=lv$.

	We consider the following performance criterion:
	$$
	Y(t)=X(t) + Q(t)(S(t)-\gamma Q(t)),
	$$
	which includes the wealth accumulated by trading on the market ($X(t)$), the shares detained by the broker evaluated at the market price ($Q(t)S(t)$) minus a quadratic cost associated with the unsold number of shares $(\gamma Q^2(t))$. 
	$\gamma Q(t)$ can be interpreted as the slippage cost for selling rapidly $Q(t)$ shares in the market (the execution price is $S(t)-\gamma Q(t)$).
	
	The broker is risk neutral and the classical optimal execution problem concerns the maximization of the expected value of the performance by the terminal date $T$:
	\begin{equation}
		\mathcal{J}^{v}(t, x, q, s)=\mathbb{E}[X(T)+ Q(T)(S(T)-\gamma Q(T))|X(t)=x, Q(t)=q, S(t)=s].
	\end{equation}
	
	We refer to this as Problem P0, see \cite{CAR15,DON}. The optimal trading strategy and the optimal inventory are 
	\begin{equation}
		v^*(t)=\frac{b-2\gamma}{2l+(2\gamma-b)(T-t)}
		Q^*(t)
	\end{equation}
	\begin{equation}
		Q^*(t)=\frac{2l+(2 \gamma -b)(T-t)}{2l+(2\gamma- b)T}Q_0
	\end{equation}
	and the value function is 
	\begin{equation}
		\mathcal{J}(t, x, q, s)=x+qs+h_2(t)q^2
	\end{equation}
	where
	$$
	h_2(t)=\big(\frac{1}{2 l}(T-t)+\frac{1}{-2\gamma+b})^{-1}-\frac{1}{2}b.
	$$
	Notice that the inventory decreases linearly over time.
	
	In what follows, we assume that the broker manages the inventory with a target performance. 
	Given a trading strategy $v$, we define
	$\tau^v_a$ as 
	$$
	\tau^v_a=\inf \{t>0: Y(t)=a\}.
	$$
	The broker maximizes the probability of reaching a high performance target (success) avoiding a low performance target (ruin). Given $k, \ h$ and the trading strategy $v$, we set  $\tau^{v}=min(\tau^{v}_k, \tau^{v}_h)$, the goal of the broker is to maximize the probability of reaching $h$ before $k$:
	\begin{equation}
		\label{MAX}
		\sup_{v\in\mathcal{A}}\mathcal{J}^{v}(y)=Prob (\tau^v=\tau^v_h |Y(0)=y),
	\end{equation}
	where $k<y<h$ and $\mathcal{A}$ is the admissible set of trading strategies:
	$$
	\mathcal{A}=\{v(t)|v\text{ is non-negative and uniformly bounded from above}\}.
	$$
	We refer to this as Problem P1 and its value function as
	$\mathcal{J}(y)$.
	
	Adopting $Y(t)$ as a performance criterion leads the broker to take care of both the cash obtained from the trade $(X(t))$ and the market value of the inventory at time $t$, which takes into account a slippage cost to trade all the shares immediately at that time ($Q(t)S(t)-\gamma Q^2(t)$). We can decompose this second component as the value of the shares at the market price ($Q(t)S(t)$) and a quadratic inventory cost ($\gamma Q^2(t)$).
	Therefore, the reward is not purely market based (either cash obtained by trading or mark-to-market of the residual inventory), the broker also faces an inventory cost. This implies that the broker has to find the right balance between quickly liquidating the shares and holding them with a penalization higher than the temporary price impact ($\gamma$ is much higher than $l$). 
	
	In summary, there are three components in the broker's performance: cash, market value of the residual inventory, and quadratic inventory cost.

	\section{Maximize probability of success/minimize the probability of ruin}
	\label{P1}
	
	Our problem is to find the optimal strategy that maximizes (\ref{MAX}).
	The optimal strategy is obtained in the following proposition.
	
	\begin{proposition}
		\label{Prop1}
		The optimal strategy for Problem P1 is
		\begin{equation}
			\label{OPT}
			v^{*}(t)= \frac{2\gamma-b}{2l}Q^*(t), \ \ 
			v^*(t)=\frac{2\gamma-b}{2l}e^{\frac{b-2\gamma}{2l}t}Q_0
		\end{equation}   
		\begin{equation}   
			\label{OPT1}
			Q^{*}(t)=e^{\frac{b-2\gamma}{2l}t}Q_0.
		\end{equation}
		The value function is 
		\begin{equation}
			\mathcal{J}(y)=\frac{e^{-\lambda y}-e^{-\lambda k}}{e^{-\lambda h}-e^{-\lambda k}},
		\end{equation}
		where
		$$\lambda=\frac{(2\gamma-b)^2}{2l\sigma^2}.$$
	\end{proposition}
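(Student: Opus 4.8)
The plan is to reduce Problem P1 to a one-dimensional boundary value problem for the value function by first identifying the law of motion of the performance process $Y$ and then exploiting a cancellation that makes $Y$ effectively the only relevant state variable.

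First I would apply It\^o's formula to $Y(t)=X(t)+Q(t)S(t)-\gamma Q(t)^2$. Since $Q$ has finite variation (it evolves through $dQ=-v\,dt$), there is no cross-variation term between $Q$ and $S$, and the terms $Sv\,dt$ coming from $dX$ and from $S\,dQ$ cancel. Using $f(v)=bv$ and $g(v)=lv$ this should yield
\[
dY(t)=\bigl[-l\,v(t)^2+(2\gamma-b)Q(t)v(t)\bigr]\,dt+\sigma Q(t)\,dW(t).
\]
The success/ruin event $\{\tau^v=\tau^v_h\}$ depends only on the path of $Y$, so I would look for a value function of $y$ alone (which drops the $\partial_q$ term) and write the HJB equation
\[
\sup_{v\ge 0}\Bigl[\bigl(-lv^2+(2\gamma-b)qv\bigr)\mathcal{J}'(y)+\tfrac12\sigma^2 q^2\mathcal{J}''(y)\Bigr]=0,
\]
with boundary data $\mathcal{J}(k)=0$ and $\mathcal{J}(h)=1$.

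The inner maximization is over a concave quadratic in $v$ (its second derivative is $-2l\,\mathcal{J}'(y)$, negative once I check $\mathcal{J}'>0$), so the first-order condition gives the feedback control $v^*=\frac{2\gamma-b}{2l}q$, which is exactly (\ref{OPT}). Substituting $v^*$ back, the optimal drift becomes $\frac{(2\gamma-b)^2}{4l}q^2$, and here is the crux of the argument: both this drift and the diffusion coefficient $\tfrac12\sigma^2q^2$ carry a common factor $q^2$, which cancels. This is precisely why the problem collapses to one dimension; equivalently, under $v^*$ the ratio between the drift and the instantaneous variance of $Y$ is constant in $q$, so that $q$ merely time-changes $Y$ and does not affect which barrier is reached first. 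What remains is the constant-coefficient ODE $\mathcal{J}''+\lambda\mathcal{J}'=0$ with $\lambda=\frac{(2\gamma-b)^2}{2l\sigma^2}$, whose solution is $\mathcal{J}(y)=Ae^{-\lambda y}+B$; imposing the two boundary conditions fixes $A$ and $B$ and delivers the stated value function. In particular $\mathcal{J}'(y)>0$, which retroactively validates the concavity used in the maximization and confirms that $v^*\ge0$ is admissible, since $2\gamma>b$ and $q>0$.

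Finally I would recover the trajectories by integrating the closed-loop dynamics $\dot Q=-v^*=-\frac{2\gamma-b}{2l}Q$, giving $Q^*(t)=Q_0e^{\frac{b-2\gamma}{2l}t}$ and hence the open-loop form of $v^*(t)$ in (\ref{OPT})--(\ref{OPT1}). To close the argument rigorously I would invoke a verification theorem: the HJB inequality makes $\mathcal{J}(Y(t\wedge\tau))$ a supermartingale under every admissible $v$ and a true martingale under $v^*$, so optional stopping yields $\mathcal{J}(y)\ge\mathrm{Prob}(\tau^v=\tau^v_h)$ for all $v$ with equality at $v^*$. I expect the main obstacle to be justifying the dimensional reduction --- that the value function genuinely depends on $y$ alone --- rather than the ODE, which is routine; the $q^2$ cancellation (equivalently, the time-change invariance of the hitting probability, which also reproduces the formula via the scale function of a drifted Brownian motion) is the key insight that has to be recognized and then confirmed a posteriori through the verification step, together with checking admissibility and the sign of $\mathcal{J}'$.
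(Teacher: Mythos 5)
Your proposal is correct and follows essentially the same route as the paper's proof: derive the HJB equation, reduce to a function of $y$ alone (justified by the boundary data and the $q^2$ cancellation between drift and diffusion), solve the resulting constant-coefficient ODE $\mathcal{J}''+\lambda\mathcal{J}'=0$ with the exponential ansatz, recover the feedback control $v^*=\frac{2\gamma-b}{2l}q$ and integrate $\dot Q=-v^*$, then close with a verification argument. The only cosmetic differences are that you perform the dimension reduction before the pointwise maximization (the paper does the maximization first, keeping the $\partial_q\mathcal{J}$ term, and drops it afterwards) and that you spell out the supermartingale/optional-stopping content of the verification theorem that the paper simply cites from Fleming and Rishel.
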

	\begin{proof}
		By standard methods in stochastic control, see \cite[chapter VI]{FLE}), the value function $\mathcal{J}(y, q)$ satisfies the second order HJB (Hamilton-Jacobi-Bellman) equation
		\begin{align}
			\left\{
			\begin{aligned}
				&\sup_{v\in\mathcal{A}}\mathcal{L}^{v}\mathcal{J}(y, q)=0, &&(y, q)\in\Omega, \\
				&\mathcal{J}(y, q)=\phi(y, q), &&(y, q)\in\partial^{*}\Omega,
				\label{HJB-linear case-Target ax+bqs-gammaq2a}
			\end{aligned}
			\right.
		\end{align}
		where $\mathcal{L}^{v}$ is the infinitesimal generator operator
		\begin{align*}
			\mathcal{L}^{v}\mathcal{J}(y, q)=\frac{1}{2}&\sigma^2q^2\partial_{yy}\mathcal{J}
			+\{[- lv^2- qbv+2\gamma qv]\partial_{y}\mathcal{J}-v\partial_{q}\mathcal{J}\}
		\end{align*}
		and
		\[
		\phi(y, q) = \left\{
		\begin{aligned}
			&1 &&y=h \\
			&0 &&y=k.
		\end{aligned}
		\right.
		\]
		Let us assume that the HJB equation admits a classical solution such that $\partial_{yy}\mathcal{J}<0$ and $\partial_{y}\mathcal{J}>0$. 
		If $-b q\partial_{y}\mathcal{J}+2\gamma q\partial_{y}\mathcal{J}-\partial_{q}\mathcal{J}\leq0$, then $\sup$ in (\ref{HJB-linear case-Target ax+bqs-gammaq2a}) is obtained for $v= 0$. To get a non-trivial solution we assume $-b q\partial_{y}\mathcal{J}+2\gamma q\partial_{y}\mathcal{J}-\partial_{q}\mathcal{J}>0$, then the supremum in (\ref{HJB-linear case-Target ax+bqs-gammaq2a}) is attained at 
		$$
		v^*(y, q)=\frac{-b q\partial_{y}\mathcal{J}+2\gamma q\partial_{y}\mathcal{J}-\partial_{q}\mathcal{J}}{2 l\partial_{y}\mathcal{J}}
		$$
		and the value function $\mathcal{J}^{v}(y, q)$ satisfies 
		\begin{align}
			\left\{
			\begin{aligned}
				&\frac{1}{2}\sigma^2q^2\partial_{yy}\mathcal{J}+\frac{(-b q\partial_{y}\mathcal{J}+2\gamma q\partial_{y}\mathcal{J}-\partial_{q}\mathcal{J})^2}{4l\partial_{y}\mathcal{J}}=0 &&(y, q)\in\Omega \\
				&1 &&y= h \\
				&0 &&y= k.
			\end{aligned}
			\right.
		\end{align}
		Observing the boundary conditions, we assume that the value function depends only on $y$ and, therefore, the HJB equation becomes:
		\begin{align*}
			&\frac{1}{2}\sigma^2q^2\partial_{yy}\mathcal{J}+\frac{(-b q\partial_{y}\mathcal{J}+2\gamma q\partial_{y}\mathcal{J})^2}{4l\partial_{y}\mathcal{J}}=0,
		\end{align*}
		yielding a linear second-order ordinary differential equation:
		\begin{align*}
			&\frac{1}{2}\sigma^2\partial_{yy}\mathcal{J}+\frac{(-b+2\gamma )^2\partial_{y}\mathcal{J}}{4l}=0,
		\end{align*}
		with boundary conditions
		$$
		\mathcal{J}(h)=1,\mathcal{J}(k)=0.
		$$
		We assume that the value function satisfies the ansatz:  
		$$
		\mathcal{J}(y) = C e^{-\lambda y} + D.
		$$
		
		We can compute
		$$
		C=\frac{1}{e^{-\lambda h}-e^{-\lambda k}}, D=\frac{-e^{-\lambda k}}{e^{-\lambda h}-e^{-\lambda k}}.
		$$
		Substituting the value function back into the expression of $v^{*}$, we get
		$$
		v^{*}= \frac{2\gamma-b}{2l}q.
		$$
		Integrating $dQ^*(u) = -v^*(u) du$ over $[0, t]$, we obtain the optimal inventory and the corresponding optimal strategy
		\begin{equation}
			\int_{0}^{t}\frac{dQ^*(u)}{Q^*(u)}=\int_{0}^{t}\frac{(b-2\gamma)}{2l}du \rightarrow Q^{*}(t)=e^{\frac{b-2\gamma}{2l}t}Q_0.
		\end{equation}
		$v^{*}$ satisfies the HJB equation, the process $Y^{*}(t)$ 
		for $v^{*}(t)$ and $Q^{*}(t)$
		has constant coefficients, and the value function $\mathcal{J}$ is sufficiently smooth, satisfying the Lipschitz condition. Consequently, all conditions of the classical verification theorem are met, see \cite[Theorem VI.4.2]{FLE}.
	\end{proof}
	
	Note that the amount of shares for Problem P1 decreases exponentially while it decreases linearly for Problem P0. 
	
	The solution shows some similarities with the one that is obtained for the same problem in the asset management setting, see \cite{BR95}.
	The optimal trading strategy doesn't depend on the two barriers $h$ and $k$. Moreover, 
	the optimal trading rate maximizes the drift of $Y(t)$. As a matter of fact, 
	by Ito's formula we get that the drift of $dY(t)$ is a quadratic function of $v(t)$ reaching the maximum for  $v(t)=\frac{2\gamma-b}{2l}Q(t)$. Therefore, selling the fixed fraction $\frac{2\gamma-b}{2l}$ of outstanding shares renders the highest expected increase in the performance. This result agrees with what is observed for the same problem in the asset management, in that case the optimal portfolio is provided by the golden rule, i.e., the portfolio that maximizes the expected logarithmic return of wealth. 
	
	As in the classical optimal execution problem, we assume $2 \gamma -b>0$, i.e., twice the slippage cost be higher than the permanent effect of the broker's order. 
	
	As expected, the constant fraction of the inventory to be liquidated in the market increases in the slippage cost (inventory cost) $\gamma$ and decreases in the permanent and temporary effects of the trade ($b$ and $l$). The trading strategy is not affected by the asset price volatility.

	\begin{remark}
		If $Q(t)=0$ then it is optimal to set $v(s)=0$ for $s \ge t$ and therefore $Y(s)=Y(t)$.
	\end{remark}

	\section{Numerical and Simulation analysis}
	\label{NUM}
	In what follows, we provide a numerical analysis of the solution for Problem P1. The baseline set of parameters comes from \cite{CAR15} and is reported in Table \ref{parame0}.\footnote{Without loss of generality we set $Q_0=Y_0=1$, this implies $S_0=1+\gamma$.} 
	As an illustration we consider a $\pm 5\%$ barrier on $Y_0$, later on we will investigate what happens changing the two barriers.

	\begin{table}[htbp]
		\centering
		\caption{Baseline set of parameters of the model.}
		\label{parame0}
		\begin{tabular}{|c|c|c|}
			\hline
			Parameter & Value & Interpretation \\
			\hline
			$b$ & 0.001 & permanent impact \\
			$l$ & 0.001 & temporary impact \\
			$\gamma$ & 0.1 & slippage cost \\
			$\sigma$ & 0.1 & volatility\\
			$Q_0$ & 1 & quantity of shares \\
			$Y_0$ & 1 & initial target value \\
			$k$ & 0.95($Y_0-0.05$) & lower boundary \\
			$h$ & 1.05($Y_0+0.05$) & upper boundary \\
			\hline
		\end{tabular}
	\end{table}
	
	In Figure \ref{fig:dynamicsofJ} we plot the value function as a function of $y$ for different values of $\lambda$. The function increases in $\lambda$ and, therefore, decreases both in the permanent and temporary price impact of trades ($l$ and $b$), in the volatility of the asset price ($\sigma$) and increases in the slippage cost ($\gamma$).
	The interpretation is that impact costs (both permanent and temporary) render costly trading in the market and, therefore, the reward function is penalized. Similarly, as volatility increases, the selling strategy is not affected, but the downside risk of reaching the lower absorbing barrier increases rendering a lower value function. Finally, an increase in slippage cost induces the broker to quickly decrease inventory with a positive effect on performance.

	\begin{figure}[H]
		\centering
		\includegraphics[width=0.5\linewidth]{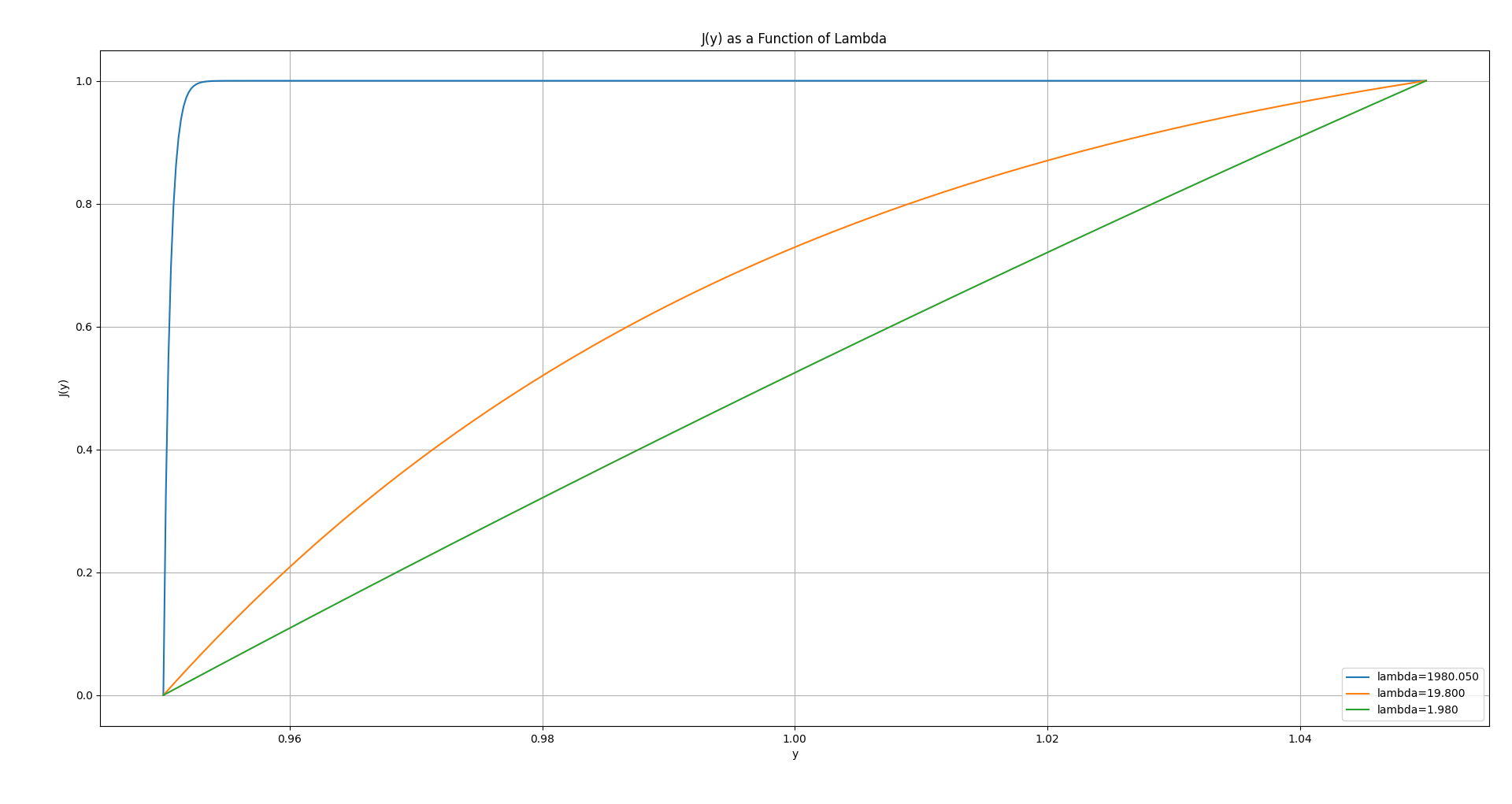}
		\caption{Value function for $\lambda=1.98, \ 19.80, \ 1980.05.$}
		\label{fig:dynamicsofJ}
	\end{figure}
	
	We evaluate the optimal strategy as the parameters $b$, $l$, $\gamma$ and $\sigma$ change. In Figure \ref{fig:dynamicofvtandYt_1}
	we reproduce $v^*(t), \ Q^*(t)$ and $Y^*(t)$ as functions of $t$ for different values of the parameters; on the third column we plot the average values obtained for $10,000$ simulations, the shaded regions
	represent the $5\%-95\%$ confidence interval of the simulations.
	
	As predicted by optimal solutions (\ref{OPT})-(\ref{OPT1}), $v^*(t)$ and $Q^*(t)$ monotonically decrease over time at an exponential rate. As far as the performance $Y^*(t)$ is concerned, we observe that it increases on average and quickly stabilizes at a value which can be far away from the upper barrier marking the success of the selling strategy ($1.05$). For example, for $\gamma=0.05$, the $5\%-95\%$ band does not include the $1.05$ level, i.e., the probability of reaching the upper barrier is smaller than 5\%. The rationale for this result lies on the fact that the broker quickly liquidates the shares because of the quadratic inventory penalization, this trading strategy reduces inventory costs but also leads to high execution costs which reduce wealth permanently. Quickly the trader ends with almost no shares, from then on she is not able to increase her wealth and, therefore, the performance (on average) stabilizes without reaching the upper barrier.
	Note that the average value of $Y^*(t)$, as well as the $5\%-95\%$ band, are well above the initial performance value and the lower barrier.
	\begin{figure}
		\centering
		\begin{subfigure}{1\textwidth}
			\centering
			\includegraphics[width=1\textwidth, height=0.5\textheight, keepaspectratio]{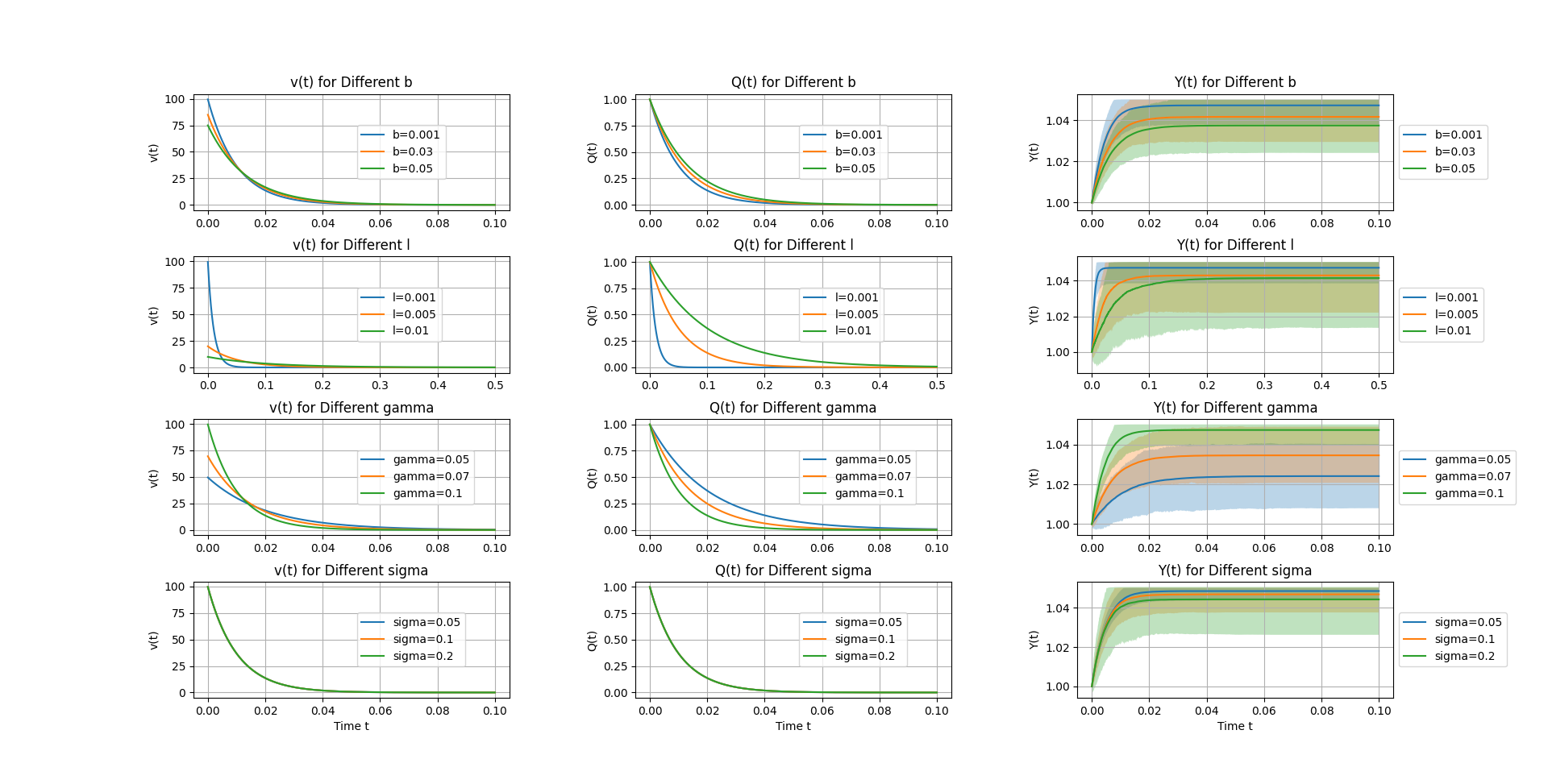}
			\caption{$v^*(t)$, $Q^*(t)$ and $Y^*(t)$ varying parameters $b, \ l, \ \gamma, \sigma$. 
				On the right we report the average values of the performance obtained for $10,000$ simulations, the shaded region
				represents the 5-95\% confidence intervals of the simulations.}
			\label{fig:dynamicofvtandYt_1}
		\end{subfigure}
		
		\vspace{0.5cm}
		\begin{subfigure}{1\textwidth}
			\centering
			\includegraphics[width=1\textwidth, height=0.4\textheight]{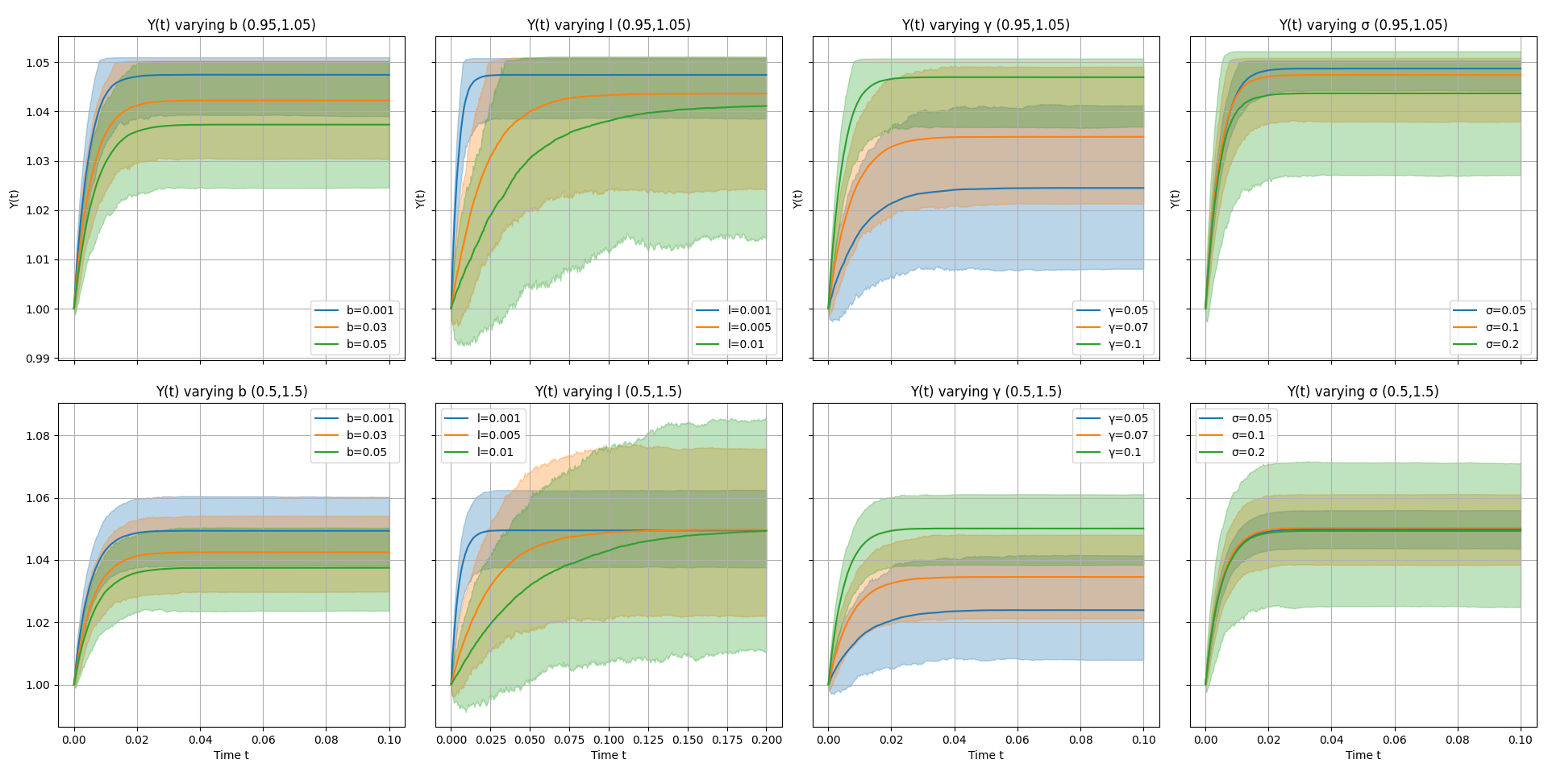}
			\caption{Average values of $Y^*(t)$ for $10,000$ simulations varying the parameters $b, \ l, \ \gamma, \sigma$; the shaded region represents the 5-95\% confidence intervals of the simulation results. The upper pictures concern $Y_{\min} = 0.95, Y_{\max} = 1.05$, the lower pictures concern $Y_{\min} = 0.5, Y_{\max} = 1.5$.}
			\label{fig:dynamicofYt_biggerband}
		\end{subfigure}
		
		\caption{Optimal strategies for Problem P1 varying the parameters $b, l, \gamma, \sigma$ and the barriers.}
		\label{fig:dynamicofvtandYt_combined}
	\end{figure}

	Looking at the optimal solution (\ref{OPT}), we observe that it doesn't depend on the asset price volatility and on the asset price itself. This is confirmed by the evolution of $v^*(t)$ and $Q^*(t)$, that are purely deterministic functions of time and do not depend on $\sigma$. This is different from what is observed for the optimal solution with an upper price limiter barrier (barrier only on one side for the asset price), see \cite{JAKIN}. In that case, the trading frequency increases as volatility goes up. Instead, considering symmetric barriers on the performance of the broker, the asset price volatility has non effect on the trading strategy.
	In the long run, the average value of $Y^*(t)$ slightly decreases and its dispersion increases in the asset price volatility. The rationale of these results is that the average performance is well above the initial value, a higher asset price volatility (with the same trading strategy) leads to higher probability on the two tails of the distribution of the performance but, while the upper absorbing barrier is not far away and the performance cannot overcome it, the lower barrier is far away. Therefore, there is almost the downside risk and, as the asset price volatility goes up,
	the performance on average decreases and its dispersion increases.
	
	As $\gamma$ increases, we observe 
	faster liquidation of the shares because the broker faces a higher quadratic inventory cost. 
	The trading strategy becomes more aggressive, this 
	negatively and linearly impacts the execution price and (permanently) $X(t)$ yielding a lower inventory quadratic cost. 
	The average performance goes up because the reduction of quadratic inventory costs prevails over the linear slippage term and the dispersion decreases because the broker quickly liquidates the shares. 
	
	An increase of the permanent effect associated with the broker's trades ($b$) has little impact on the liquidation rate, it lowers the long run average performance and increases the dispersion around the average. The rationale of this result is that as $b$ increases, the liquidation rate slightly decreases and the market price is significantly impacted. Poor execution conditions decrease the average value of the broker's performance. On the other hand, its dispersion goes up because the broker holds the shares for a longer period.
	
	A similar effect is observed for an increase of the temporary effect ($l$). The main difference with respect to $b$ is that we observe a less pronounced effect on the long run average performance.
	
	In Figure \ref{fig:dynamicofYt_biggerband}, we show the simulations of the broker's performance as the two barriers change, from $1\pm 0.05$ to $1\pm 0.5$. We do not report the time evolution of the trading strategy and of the inventory as they are not affected by the change. As the barriers widen, both the average value of the performance and its variance increase.
	
	In Figure \ref{fig:dynamicofvtandYt_withP0} we compare $v^*(t)$, $Q^*(t)$, and $Y^*(t)$ for Problem P1 with the solutions obtained for Problem P0. It turns out that the strategy obtained with a performance target liquidates the shares faster than the strategy obtained for Problem P0. The inventory obtained for Problem P1 exponentially decreases over time and is always below the one associated with Problem P0, which linearly decreases over time. 
	Anticipating the liquidation, the broker 
	faces higher linear execution costs but also rapidly decreases quadratic inventory costs. The first effect impacts $X(t)$ and therefore is permanent, the second one impacts the performance instantaneously.
	This explains why the performance for Problem P0 on average is lower than that for Problem P1 over a short time horizon and becomes higher over a longer horizon, see also Table \ref{tableP0P1} reporting the mean and the variance of the performance at $t=0.02, \ 0.06, \ 0.1$ for $T=1$.
	The higher average value in the long run for Problem P0 is 
	also due to the fact that there is no upper bound to the performance  in this problem and, therefore, it can overcome the upper barrier imposed in Problem P1.
	
	The dispersion of the performance for Problem P1 is smaller than for Problem P0 over a short horizon because the broker holds a smaller inventory and, therefore, less volatility in her book. A result that confirms what has been obtained in \cite{JAKIN} considering the optimal acquisition problem with a price limiter. Over a longer horizon, the dispersion of the performance for Problem P0 becomes smaller than that for Problem P1. The phenomenon is due to the fact that, liquidating the shares quickly in Problem P1, the broker cannot manage intertemporally the inventory because it goes quickly to zero,
	as a consequence the dispersion enlarges as time goes and then remains constant. Instead, managing the inventory over a fixed horizon, the broker can handle the volatility intertemporally and does a god job in the long run. 
	
	We can conclude that the relative performance of the two problems depends on the evaluation horizon. Over a short horizon, a performance (infinite horizon) target strategy dominates the classical finite horizon execution problem (higher mean and lower variance), over a longer period the reverse is observed.
	
	\begin{figure}
		\centering
		
		\begin{subfigure}{1\textwidth}
			\centering
			\includegraphics[width=1\linewidth]{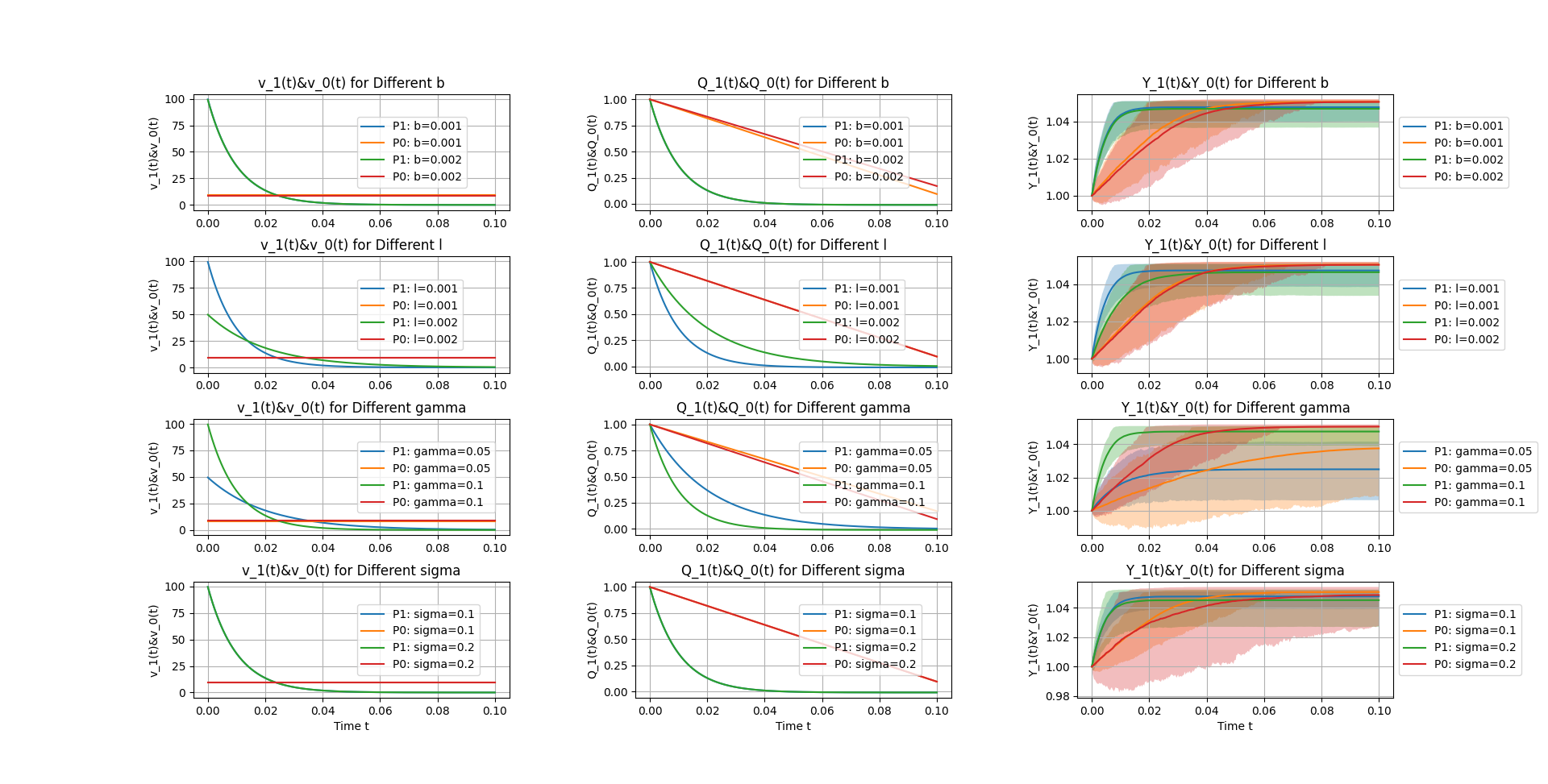}
			\caption{Optimal strategies obtained for Problem P1 ($v_1(t)$, $Q_1(t)$, and $Y_1(t)$), and for Problem P0 ($v_0(t)$, $Q_0(t)$, and $Y_0(t)$) varying the parameters $b, l, \gamma, \sigma$. 
				On the right we report the average of the performance obtained for $10,000$ simulations, the shaded regions
				represent the 5\%-95\% confidence interval of the simulations.}
			\label{fig:dynamicofvtandYt_withP0}
		\end{subfigure}

		\vspace{0.1cm}
		
		\begin{subfigure}{1\textwidth}
			\centering
			\includegraphics[width=0.8\linewidth]{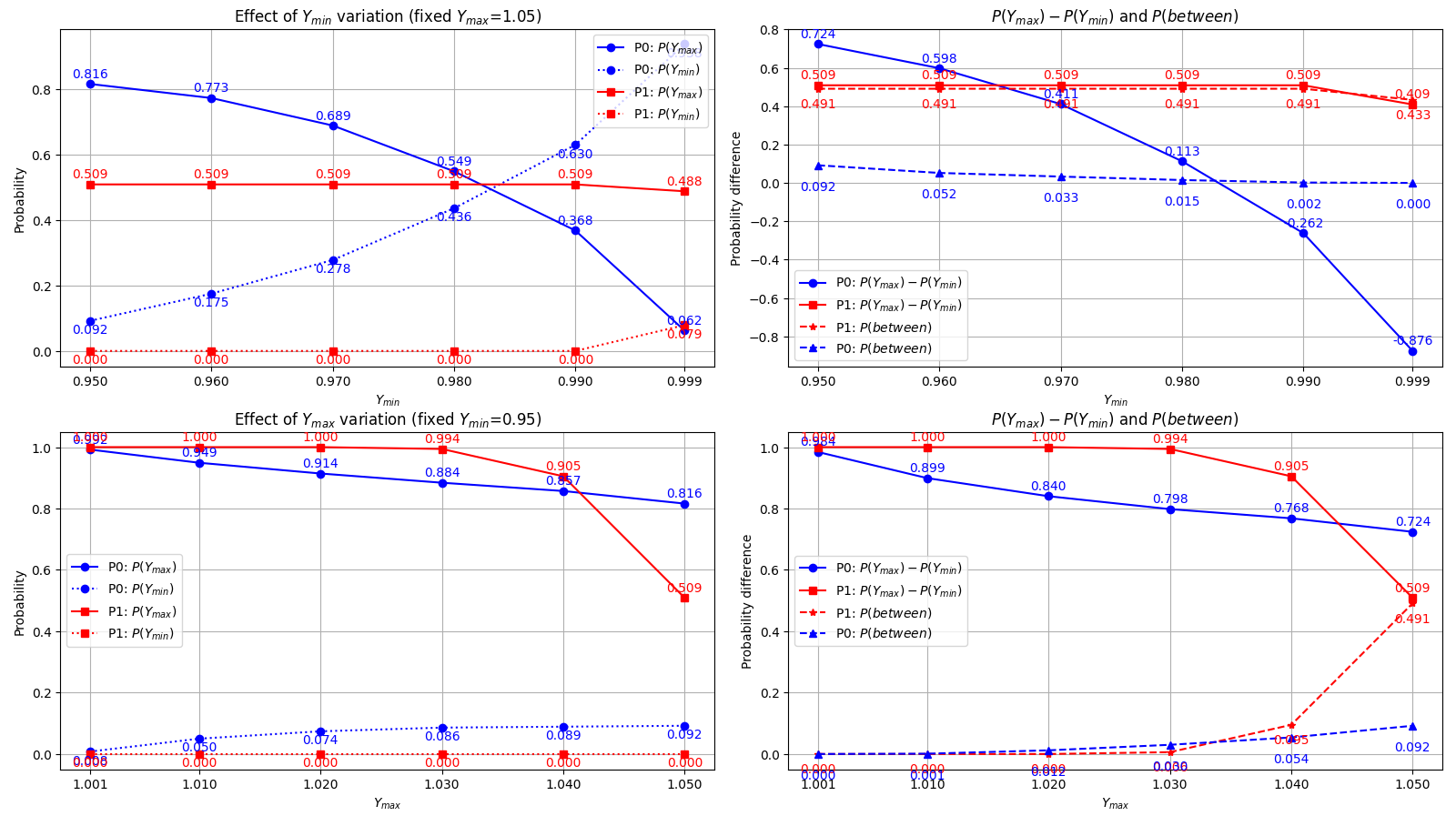}
			\caption{Probabilities of reaching $Y_{\max}$ or $Y_{\min}$ or neither by $t=T=1$ for 
				Problem P0 and P1, based on $10,000$ simulations. On the two plots on the left (top and bottom left),  blue solid lines represent the probability of hitting $Y_{\max}$ for Problem P0 ($P(Y_{max})$), while dotted lines represent the probability of hitting $Y_{\min}$ ($P(Y_{min})$); the same lines in red are for Problem P1. The solid lines on the right illustrate the probability difference $P(Y_{\max}) - P(Y_{\min})$, while the dotted lines represent the probability of remaining between the two bands at $t=T=1$ ($P(between$). 
			}
			\label{fig:probabilityp0p1}
		\end{subfigure}
		\caption{Simulation results for Problem P0 and P1.}
		\label{fig:Finalaveragevalue_Yt}
	\end{figure}

	We can evaluate the probability of reaching one of 
	the two barriers $Y_{\max}$ and $Y_{\min}$ first according to the solution for Problem P1 and to the one obtained for Problem P0. 
	We perform 10,000 simulations. Problem P0 has a terminal time $T$, while Problem P1 doesn't include a time horizon. In Figure \ref{fig:probabilityp0p1}
	we evaluate the probabilities for the two problems  
	of reaching $Y_{\max}, \ Y_{min}$ first or neither of the two by $t=T=1$.
	
	We emphasize that Problem P1 aims to maximize the probability of reaching $Y_{\max}$ before $Y_{\min}$, rather than simply maximizing the probability of reaching $Y_{\max}$. The broker has to balance the two goals.
	
	The probability of reaching the upper barrier for Problem P1 depends on the levels of the two barriers. For the original levels (0.95 and 1.05), the probability of reaching $Y_{\max}$ by time $t=1$ solving Problem P1 is $50.9\%$, the probability of reaching $Y_{\min}$ is zero and the probability of not reaching 
	$Y_{\max}$ and $Y_{\min}$ by $t=1$ is $49.1\%$. Instead, for Problem P0 we have a significant probability of reaching $Y_{\max}$ and $Y_{\min}$ by $t=1$, $81.6\%$ and $9.2\%$, respectively, and the probability of not reaching the two barriers by $t=1$ is $9.2\%$. It turns our that Problem P0 leads to the two extremes while Problem P1 leads to the upper barrier or to remain between the two barriers. 
	
	As the lower and upper barriers change (first and second set of pictures, respectively), we observe different phenomena. 
	
	As far as Problem P1 is concerned, when the lower barrier changes, the probability of success as well as the probability of not touching the two barriers by $t=1$ remain almost constant at $50.9\%$ and $49.1\%$, respectively, with null probability of touching the lower barrier; only in case of a lower barrier very close to the initial performance value ($Y_{\min}=0.999$) we observe a positive (small) probability of touching it by $t=1$.

	\begin{table}[htbp]
		\centering
		\caption{Mean and variance of the performance for Problem P0 and P1 as depicted in Figure \ref{fig:dynamicofvtandYt_withP0} at $t=0.02, \ 0.06, \ 0.10$ based on $10,000$ simulations, $T=1$.}
		\begin{tabular}{ccccccc}
			\toprule
			Variation & Parameter & Time & P1 Mean & P1 Variance & P0 Mean & P0 Variance \\
			\midrule
			\multirow{10}{*}{$b$} 
			& \multirow{5}{*}{0.001} 
			& 0.02 & 1.04709 & 0.00002 & 1.03149 & 0.00015 \\
			&                           & 0.06 & 1.04742 & 0.00002 & 1.04996 & 0.00001 \\
			&                           & 0.10 & 1.04741 & 0.00002 & 1.05062 & 0.00000 \\
			\cmidrule{2-7}
			& \multirow{5}{*}{0.002} 
			& 0.02 & 1.04680 & 0.00002 & 1.02831 & 0.00016 \\
			&                           & 0.06 & 1.04719 & 0.00002 & 1.04935 & 0.00002 \\
			&                           & 0.10 & 1.04719 & 0.00002 & 1.05051 & 0.00000 \\
			\midrule
			\multirow{10}{*}{$l$} 
			& \multirow{5}{*}{0.001} 
			& 0.02 & 1.04709 & 0.00002 & 1.03123 & 0.00014 \\
			&                           & 0.06 & 1.04742 & 0.00002 & 1.05005 & 0.00001 \\
			&                           & 0.10 & 1.04743 & 0.00002 & 1.05061 & 0.00000 \\
			\cmidrule{2-7}
			& \multirow{5}{*}{0.002} 
			& 0.02 & 1.04269 & 0.00006 & 1.03028 & 0.00015 \\
			&                           & 0.06 & 1.04666 & 0.00003 & 1.04977 & 0.00001 \\
			&                           & 0.10 & 1.04670 & 0.00003 & 1.05050 & 0.00000 \\
			\midrule
			\multirow{10}{*}{$\gamma$} 
			& \multirow{5}{*}{0.05} 
			& 0.02 & 1.02154 & 0.00009 & 1.01405 & 0.00017 \\
			&                           & 0.06 & 1.02489 & 0.00010 & 1.03207 & 0.00024 \\
			&                           & 0.10 & 1.02496 & 0.00010 & 1.03795 & 0.00019 \\
			\cmidrule{2-7}
			& \multirow{5}{*}{0.1} 
			& 0.02 & 1.04654 & 0.00002 & 1.02948 & 0.00015 \\
			&                           & 0.06 & 1.04691 & 0.00002 & 1.04962 & 0.00002 \\
			&                           & 0.10 & 1.04691 & 0.00002 & 1.05056 & 0.00000 \\
			\midrule
			\multirow{10}{*}{$\sigma$} 
			& \multirow{5}{*}{0.1} 
			& 0.02 & 1.04685 & 0.00002 & 1.03037 & 0.00015 \\
			&                           & 0.06 & 1.04719 & 0.00002 & 1.04986 & 0.00001 \\
			&                           & 0.10 & 1.04719 & 0.00002 & 1.05063 & 0.00000 \\
			\cmidrule{2-7}
			& \multirow{5}{*}{0.2} 
			& 0.02 & 1.04456 & 0.00008 & 1.02935 & 0.00045 \\
			&                           & 0.06 & 1.04496 & 0.00008 & 1.04622 & 0.00021 \\
			&                           & 0.10 & 1.04496 & 0.00008 & 1.04865 & 0.00012 \\
			\bottomrule
		\end{tabular}
		\label{tableP0P1}
	\end{table}

	When the upper barrier changes, the probability of touching it remains constant next to 100\%, only for a high threshold ($1.05$) the probability significantly decreases. When this happens, 
	we do not observe an increase of the probability of reaching the lower barrier, instead we observe a significant probability for the performance to remain between the two barriers by $t=1$.

	As far as Problem P0 is concerned, we observe that the probability of hitting the upper (lower) barrier by time $t=1$ goes down (up) with $Y_{\max}$ and $Y_{\min}$. For a high $Y_{\min}$ the probability of touching the lower barrier is higher than that of touching the upper barrier.
	
	Interesting enough, while the sum of the probabilities of touching $Y_{\max}$ and $Y_{\min}$ is next to $1$ for Problem P0 with a very low probability of terminating between the two barriers, this is not the case for Problem P1. In case the probability of hitting first $Y_{\max}$ is less than $1$, the P1 strategy assigns a positive probability to lie between the two barriers by $t=1$ with a very low probability of touching the lower barrier.
	
	The key insight is that the P1 strategy is aggressive ($Q^*(t)$ exponentially decreases) and quickly goes above $Y_0$, but then almost only two events can occur: the performance touches the upper barrier or ends without touching both barriers because the inventory is null.
	Instead, the P0 strategy is less aggressive ($Q^*(t)$ linearly decreases), this almost leads to three feasible events: touching the upper, the lower barrier or neither of the two barriers. 
	The P1 trading strategy is at the same time more aggressive (it quickly goes above the initial performance) and cautious, minimizing the probability of touching the lower barrier. Instead, the classical solution takes risks on both sides.
	The P1 strategy seems to sacrifice the potential to achieve $Y_{\max}$ 
	minimizing the downside risk of reaching $Y_{\min}$.
	
	We can conclude that optimal execution with a target performance does not take excessive risk but is more aggressive and this leads to liquidation of the shares under poor conditions with a poor performance over a long horizon. The strategy is not affected by excessive risk taking but by shortermism. 
	
	\section{Comparison with a price limiter strategy}
	\label{PRIC}
	As a performance criterion we set 
	\begin{equation}
		Y(t) = X(t) + Q(t)(S(t) - \gamma Q(t)) - \phi \int_0^{t} Q(u)^2 \,du 
		\label{ACEQ}
	\end{equation}
	with $Y(0) = X(0) + Q(0)(S(0) - \gamma Q(0))$.
	The problem is to find the strategy that maximizes (\ref{MAX}), and we refer to it as Problem $P1^{\prime}$.
	The optimal strategy is obtained in the following proposition.
	
	\begin{proposition}
		The optimal strategy for Problem $P1^{\prime}$ is
		\begin{equation}
			\label{OP1}
			v^{*}(t)= \frac{2\gamma-b}{2l}Q^*(t), \ \ 
			v^*(t)=\frac{2\gamma-b}{2l}e^{\frac{b-2\gamma}{2l}t}Q_0
		\end{equation}   
		\begin{equation}  
			\label{OP2}
			Q^{*}(t)=e^{\frac{b-2\gamma}{2l}t}Q_0.
		\end{equation}
		The value function is 
		\begin{equation}
			\mathcal{J}(y)=\frac{e^{-\lambda y}-e^{-\lambda k}}{e^{-\lambda h}-e^{-\lambda k}},
		\end{equation}
		where
		$$\lambda=\frac{(-b+2\gamma )^2-4l\phi}{2l\sigma^2}.$$
	\end{proposition}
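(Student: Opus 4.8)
The plan is to replay the proof of Proposition \ref{Prop1} almost verbatim, isolating the single place where the running penalty $-\phi\int_0^t Q(u)^2\,du$ enters. First I would compute $dY(t)$ by applying Ito's formula to (\ref{ACEQ}). Since the added term is an absolutely continuous (finite-variation) integral, its differential is simply $-\phi Q(t)^2\,dt$: it contributes nothing to the diffusion coefficient, which stays $\sigma Q(t)\,dW(t)$, and it does not involve the control $v$. Consequently the drift of $Y(t)$ becomes $-lv^2 - bqv + 2\gamma qv - \phi q^2$, i.e. exactly the Problem P1 drift of Proposition \ref{Prop1} with the extra additive term $-\phi q^2$. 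The relevant state remains the pair $(Y,Q)$: although $Y$ is now path-dependent through the integral, that dependence is absorbed into the $Y$-dynamics, so the process is still Markovian in $(y,q)$ and the HJB framework of Proposition \ref{Prop1} carries over unchanged.

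Next I would write the generator $\mathcal{L}^{v}\mathcal{J}(y,q)$ with this modified drift and maximize over $v$. Because the $-\phi q^2$ term is independent of $v$, the first-order condition in $v$ is identical to the one in Proposition \ref{Prop1}, giving the same maximizer. Adopting, as before, the ansatz that $\mathcal{J}$ depends only on $y$ (motivated by the $q$-independent boundary data $\phi(y,q)$), one has $\partial_q\mathcal{J}=0$, so the optimal control collapses to $v^{*}=\frac{2\gamma-b}{2l}q$, which integrates to $Q^{*}(t)=e^{\frac{b-2\gamma}{2l}t}Q_0$ — precisely (\ref{OP1})–(\ref{OP2}). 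Substituting $v^{*}$ back, the $v$-dependent part again simplifies to $\frac{(2\gamma-b)^2}{4l}q^2$, so the HJB reduces, after dividing by $q^2$, to the linear ODE $\frac{1}{2}\sigma^2\partial_{yy}\mathcal{J} + \big(\frac{(2\gamma-b)^2}{4l}-\phi\big)\partial_{y}\mathcal{J}=0$. The only change relative to Proposition \ref{Prop1} is the shift of the first-order coefficient by $-\phi$.

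With the exponential ansatz $\mathcal{J}(y)=Ce^{-\lambda y}+D$, the characteristic equation $\frac{1}{2}\sigma^2\lambda = \frac{(2\gamma-b)^2}{4l}-\phi$ yields $\lambda = \frac{(2\gamma-b)^2-4l\phi}{2l\sigma^2}$, the claimed value; the boundary conditions $\mathcal{J}(h)=1$, $\mathcal{J}(k)=0$ then fix $C,D$ and reproduce the stated value function. Finally I would invoke the verification theorem \cite[Theorem VI.4.2]{FLE} exactly as in Proposition \ref{Prop1}, since $Y^{*},Q^{*}$ have constant coefficients and $\mathcal{J}$ is smooth. The one point genuinely requiring care — the main obstacle — is the consistency of the sign hypotheses used to derive the maximizer: the argument presupposes $\partial_{y}\mathcal{J}>0$ and $\partial_{yy}\mathcal{J}<0$, which hold precisely when $\lambda>0$. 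I must therefore record the admissibility restriction $\phi<\frac{(2\gamma-b)^2}{4l}$ (equivalently $\lambda>0$). This condition is exactly what keeps the optimally controlled performance drift $\big(\frac{(2\gamma-b)^2}{4l}-\phi\big)q^2$ positive, so that the penalty does not overwhelm the gain from trading; if it fails, the exponential solution degenerates and a separate analysis would be required.
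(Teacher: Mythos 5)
Your proposal is correct and follows essentially the same route as the paper's proof: the generator picks up the control-independent drift term $-\phi q^2\partial_y\mathcal{J}$, so the maximizer and hence $v^*(t)$, $Q^*(t)$ are unchanged from Proposition \ref{Prop1}, and the reduced ODE's first-order coefficient shifts by $-\phi$, yielding $\lambda=\frac{(2\gamma-b)^2-4l\phi}{2l\sigma^2}$. Your closing remark recording the restriction $\phi<\frac{(2\gamma-b)^2}{4l}$ (i.e.\ $\lambda>0$) is a sensible point the paper leaves implicit, though strictly speaking only $\partial_y\mathcal{J}>0$ is needed to derive the maximizer, and that monotonicity holds for the exponential solution under either sign of $\lambda$.
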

	\begin{proof}
		The value function $\mathcal{J}(y, q)$ satisfies the second order HJB equation
		\begin{align}
			\left\{
			\begin{aligned}
				&\sup_{v\in\mathcal{A}}\mathcal{L}^{v}\mathcal{J}(y, q)=0, &&(y, q)\in\Omega, \\
				&\mathcal{J}(y, q)=\Phi(y, q), &&(y, q)\in\partial^{*}\Omega,
				\label{HJB-linear case-Target ax+bqs-gammaq2}
			\end{aligned}
			\right.
		\end{align}
		where $\mathcal{L}^{v}$ is the infinitesimal generator operator
		\begin{align*}
			\mathcal{L}^{v}\mathcal{J}(y, q)=\frac{1}{2}&\sigma^2q^2\partial_{yy}\mathcal{J}--\phi q^2\partial_y\mathcal{J}
			+\{[- lv^2- qbv+2\gamma qv(t)]\partial_{y}\mathcal{J}-v\partial_{q}\mathcal{J}\}
		\end{align*}
		and
		\[
		\Phi(y, q) = \left\{
		\begin{aligned}
			&1, &&y=h, \\
			&0, &&y=k.
		\end{aligned}
		\right.
		\]
		The supremum is reached at 
		$$
		v^*(y, q)=\frac{-b q\partial_{y}\mathcal{J}+2\gamma q\partial_{y}\mathcal{J}-\partial_{q}\mathcal{J}}{2 l\partial_{y}\mathcal{J}}
		$$
		and the value function $\mathcal{J}^{v}(y, q)$ satisfies 
		\begin{align}
			\left\{
			\begin{aligned}
				&\frac{1}{2}\sigma^2q^2\partial_{yy}\mathcal{J}-\phi q^2\partial_y\mathcal{J}+\frac{(-b q\partial_{y}\mathcal{J}+2\gamma q\partial_{y}\mathcal{J}-\partial_{q}\mathcal{J})^2}{4l\partial_{y}\mathcal{J}}=0, &&(y, q)\in\Omega, \\
				&1, &&y= h, \\
				&0, &&y= k.
			\end{aligned}
			\right.
		\end{align}
		Observing the boundary conditions, we assume that the value function depends only on $y$. The HJB equation becomes:
		\begin{align*}
			&\frac{1}{2}\sigma^2q^2\partial_{yy}\mathcal{J}-\phi q^2\partial_y\mathcal{J}+\frac{(-b q\partial_{y}\mathcal{J}+2\gamma q\partial_{y}\mathcal{J})^2}{4l\partial_{y}\mathcal{J}}=0.
		\end{align*}
		Reorganizing the equation, we have a linear second-order ordinary differential equation:
		\begin{align*}
			&\frac{1}{2}\sigma^2\partial_{yy}\mathcal{J}+\frac{[(-b+2\gamma )^2-4l\phi]\partial_{y}\mathcal{J}}{4l}=0,
		\end{align*}
		with boundary conditions
		$$
		J(h)=1,J(k)=0.
		$$
		The remaining computations are as in Proposition \ref{Prop1}, with $\lambda=\frac{(-b+2\gamma )^2-4l\phi}{2l\sigma^2}$.
	\end{proof}

	In \cite{JAKIN}, the optimal execution problem for the acquisition of shares with a price limiter is considered, in what follows we reformulate it for the problem of selling shares. The objective function is
	$$
	E[\int_0^{\tau}(S(u)-lv(u))v(u)du + Q(\tau)(S(\tau) - \gamma Q(\tau)) - \phi \int_0^{\tau} Q(u)^2 \, du], 
	$$
	where $\tau=T\wedge\{t:Q(t)=0\}\wedge\inf\{t:S(t)=\underline{S}\}$ is the first time that either $S(t)$ reaches the lower limit price $\underline{S}$, there are no more shares to be liquidated, or the terminal time $T$ is reached.
	As in \cite{JAKIN} we set $b=0$ to reduce the dimension of the problem.
	
	We consider the classical Almgren-Chriss (AC) strategy that aims to maximize the expected value of (\ref{ACEQ}) over time $T$, see \cite{CAR15}:
	
	\begin{equation*}
		v^*(t) = \Gamma \frac{\zeta e^{\Gamma (T-t)} + e^{-\Gamma (T-t)}}{\zeta e^{\Gamma (T-t)} - e^{-\Gamma (T-t)}} Q^{*}(t),
	\end{equation*}
	\begin{equation*}
		Q^{*}(t) = \frac{\zeta e^{\Gamma (T-t)} - e^{-\Gamma (T-t)}}{\zeta e^{\Gamma T} - e^{-\Gamma T}} Q_0,
	\end{equation*}
	where
	\begin{equation*}
		\Gamma = \sqrt{\frac{\phi}{l}}, \quad
		\zeta = \frac{\gamma - \frac{1}{2} b + \sqrt{l\phi}}{\gamma - \frac{1}{2} b - \sqrt{l\phi}}.
	\end{equation*}
	
	We now provide a numerical analysis of the solution of Problem P1' comparing it with the AC strategy and the one in \cite{JAKIN}.
	The set of parameters is reported in Table \ref{parame}.
	
	\begin{table}[htbp]
		\centering
		\caption{Set of parameters of the model.}
		\label{parame}
		\begin{tabular}{|c|c|c|}
			\hline
			Parameter & Value & Interpretation \\
			\hline
			$l$ & 0.0001 & temporary impact \\
			$\gamma$ & 0.1 & slippage cost \\
			$\sigma$ & 0.1 & volatility \\
			$\phi$ & 0.001 & running penalty\\
			$X_0$ & 0 & initial wealth function\\
			$Q_0$ & 1 & quantity of shares\\
			$S_0$ & 20 & initial target value \\
			$\underline{S}$ & 19.9 & lower limit of price\\
			$k$ & 19.85($Y_0-0.05$) & lower boundary \\
			$h$ & 19.95($Y_0+0.05$) & upper boundary \\
			$T$ & 1 & terminal time\\
			\hline
		\end{tabular}
	\end{table}
	
	In Figure \ref{fig:Compare_PLimit_Q}, we compare the three strategies. The strategy with a price limiter is quite similar to the classical one; instead, the one obtained with two boundaries is much more aggressive.
	
	In Figure \ref{fig:enter-label} we show the histogram of the performance for each strategy: (Blue) the price limiter strategy, (Red) the AC strategy, and (Green) the target strategy. Notice that the performance obtained for Problem P1' is much more concentrated with respect to the classical one and the one obtained with a price limiter.

	\begin{figure}[H]
		\centering
		\includegraphics[width=0.8\linewidth]{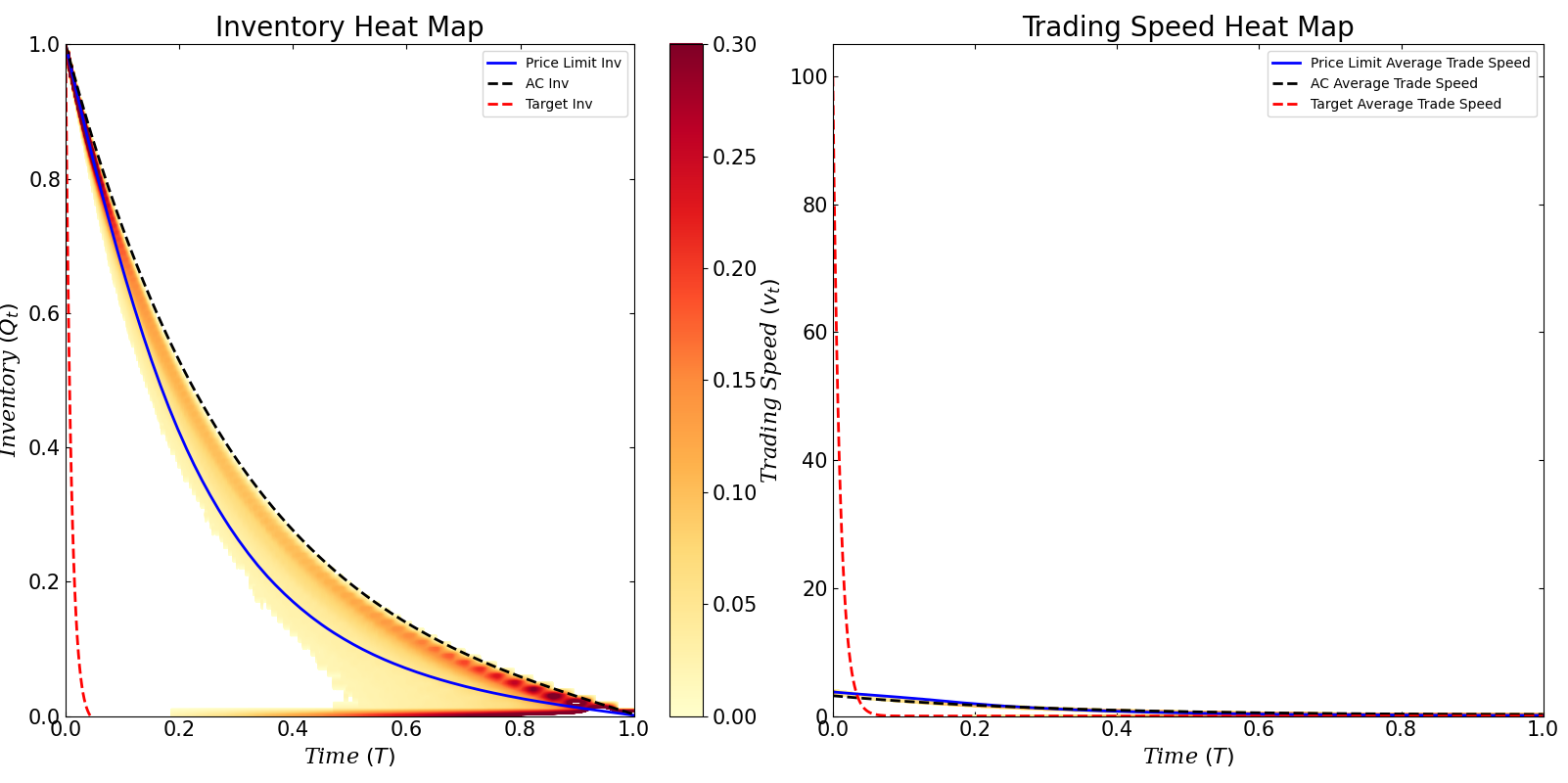}
		\caption{The left hand figure shows $Q^*(t)$ in (\ref{OP2}), compared with the average trajectory of price limiter optimal strategy and the AC optimal strategy. The right hand figure shows $v^*(t)$ as given in (\ref{OP1}) compared to the other strategies. The color bar in the left figure indicates the density of the simulation for the price limiter optimal strategy.}
		\label{fig:Compare_PLimit_Q}
	\end{figure}
	
	\begin{figure}[H]
		\centering
		\includegraphics[width=0.7\linewidth]{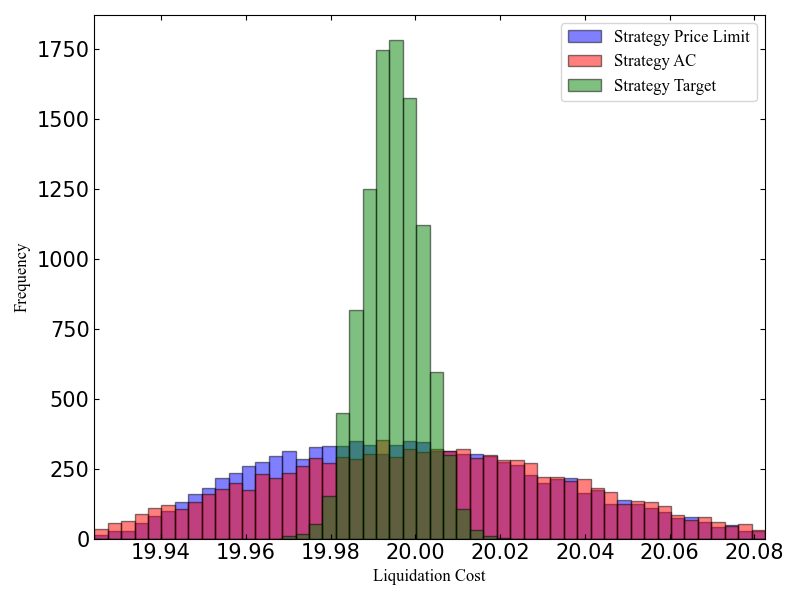}
		\caption{Histogram of the liquidation  cost, (Blue) Limit price strategy, (Red) AC strategy and (Green) target strategy.}
		\label{fig:enter-label}
	\end{figure}
	
	\section{Conclusions}
	\label{CONC}
	The liquidation of a large number of shares leads to the optimal execution problem that can be embedded in a principal-agent relationship between the owner of the shares and the broker who has to execute the order. The remuneration of the broker should be designed to induce him to act in the interest of the owner of the shares avoiding excessive risk taking.
	
	In this paper, we have considered the case of a fixed remuneration if the performance of the broker (made up of cash and a penalization for the inventory) reaches an upper barrier and zero remuneration in case a lower barrier is reached.
	
	A remuneration mechanism related to reaching a performance target is considered a bad contract in the executive compensation literature, as it may induce excessive risk taking and shortermism. In our analysis, we have shown that a fixed remuneration in case of success and a null one in case of poor performance leads to a shortermism bias but not excessive risk taking.
	
	The optimal strategy with a performance target foresees a liquidation of the shares at a much higher rate compared to the solution obtained according to the classical execution problem. The strategy obtained with a performance target yields a higher performance and a smaller dispersion compared to the solution with a finite horizon over a short horizon, but over a long horizon the reverse is observed.
	
	The strategy turns out to be aggressive and conservative at the same time. Shares are quickly liquidated, the performance goes up, but then the broker balances the goal of reaching the upper barrier avoiding the lower one. As a consequence, there is a high probability of reaching the upper barrier and of remaining between the two barriers with a very low probability of reaching the lower performance barrier. Instead, the strategy obtained by solving the classical problem allocates a significant probability to reach both extremes.

\end{document}